\documentclass[11pt]{article}

\usepackage[utf8]{inputenc}
\usepackage[T1]{fontenc}
\usepackage{lmodern}
\usepackage[margin=1.1in]{geometry}
\usepackage{amsmath,amssymb,amsthm}
\usepackage{booktabs}
\usepackage{graphicx}
\usepackage{float}
\usepackage[font=small,labelfont=bf,labelsep=period]{caption}
\usepackage{microtype}
\usepackage{enumitem}
\usepackage[hidelinks]{hyperref}

\setlist[enumerate]{itemsep=3pt,topsep=5pt,parsep=0pt}
\theoremstyle{plain}
\newtheorem{theorem}{Theorem}
\newtheorem{proposition}{Proposition}

\title{\bfseries Decomposition Buys Integrity, Not Yield}
\author{Rong He}
\date{\today}

\begin{document}
\maketitle
\begin{center}\small
Code, verification scripts, and the findings log accompany this paper as ancillary files.
\end{center}

\begin{abstract}
\noindent
Multi-agent systems split a task across a tree of agents and justify the split with folklore:
smaller contexts, cleaner separation of concerns, parallelism. We ask what the split does to
the quantity such systems exist to produce --- how much of what the leaves discover actually
reaches the root. Model a decomposition as a tree in which an agent handed $b$ items keeps any
one of them with probability $r(b)$. Two results follow. First, if $r(b)=1/b$, every tree
delivers exactly one finding, for every task size and every shape; we verify this to
$2.4\times10^{-15}$ on 20,000 random irregular trees. Second, if $r(b)=Cb^{-\delta}$, then a
depth-$k$ tree over $N$ findings yields $C^{k}N^{1-\delta}$: task size and architecture
separate completely, and the architecture contributes only the factor $C\le1$ per level. Flat
is therefore optimal for yield, and no arrangement of agents escapes the exponent $\delta$. We
measure $\delta$ on 600 production deep-research traces by three independent identifications,
including a per-item hazard with a cluster bootstrap, and get $\delta=0.34$ (95\% CI
$[0.30,\,0.38]$); a research agent that surfaces four times as many sources carries only 2.3
times as many forward. We also measure $C$ itself, at a hop where the item boundaries are
given by the tool rather than by a text heuristic: a search returns $b$ numbered results and
the agent writes one turn. Because $b=1$ occurs 550 times in the corpus, $C=\rho(1)$ is
observed rather than extrapolated, at $0.571$ $[0.527,\,0.615]$ over 16,082 hops. Each
additional tier therefore costs roughly 45\% of the yield. Depth pays for itself on a second
axis: the root context is the only state that persists and the only state that cannot cheaply
forget, and depth cuts its exposure from $N$ items to $N^{1/k}$. Balancing the two gives a
closed form whose optimum is two or three tiers at measured production parameters and rises by
seven tiers across fourteen orders of magnitude of task size. Finally we test whether deployed
systems follow the rule. On 743,819 production tool calls, a discrete-time hazard model using
only pre-decision information finds that delegation does not respond to a filling context
(odds ratio 0.969 per doubling, $[0.954,\,0.985]$) and is instead an opening move. Two naive
versions of that same regression give $+0.65$ and $-0.37$; both are mechanically biased, and
we report them because the difference between them is the result. A third axis reconciles the
model with practice. Production flat agents bill sub-quadratically in their round count --- a
measured exponent of $1.392\pm0.004$, not the 2 that append-only context predicts. A tier also
costs alignment: on 1,012 annotated multi-agent traces one brief in sixteen goes off-target,
so the per-tier penalty is $C\mu=0.536$ rather than $C$, and at equal spend two tiers overtake
flat at 403 findings. Across every parameter we measured the model says 0.7\% to 11.3\% of
production sessions are worth delegating, against 7.8\% that do --- the right order of
magnitude from parameters none of which were fitted to it, and no sharper than that.
\end{abstract}

\section{Introduction}

A multi-agent system is a bet that a task is better done by several agents than by one. The
bet is usually argued on grounds of context: one agent's window fills, so split the work, give
each piece its own window, and summarize upward. The argument is plausible and nearly
universal. It has not been checked.

Checking it requires deciding what decomposition is supposed to deliver. We take the simplest
answer. A task requires some number of atomic findings --- a file read, a fact retrieved, a
test result --- and the system succeeds to the extent that those findings are present when the
final answer is written. Call the expected number that survive to the root the \emph{yield}.
Yield is not the only thing a system should optimize, but it is the thing the context argument
is implicitly about, and it is the thing we can compute.

Under that definition the context argument fails, and it fails for a reason that is short
enough to state here. Suppose an agent given $b$ items retains any one of them with
probability $r(b)$, and suppose $r$ falls off as a power law, $r(b)=Cb^{-\delta}$. A flat
agent absorbs all $N$ findings and yields $N\cdot r(N)=CN^{1-\delta}$. A depth-$k$ tree with
fan-out $b=N^{1/k}$ yields $N\cdot r(b)^{k}=C^{k}N^{1-\delta}$. The two differ by $C^{k-1}$,
and $C$ is the retention of a single item through a single summarization, which cannot exceed
one. Depth costs; it never pays. Narrower contexts do not buy back what they lose, because the
loss exponent $\delta$ applies to the total either way.

This is a negative result about one objective, not about multi-agent systems. The positive
result is that depth is paid for on a different axis entirely, and that the axis is
identifiable. Of all the contexts in a running system exactly one persists: the root's. It
also happens to be the one context that cannot cheaply forget, because deleting an item from a
cached prefix charges for everything behind it \cite{r11}. Errors that enter an ephemeral
agent's context die with it; errors that enter the root's are permanent. A flat root absorbs
$N$ items and $N$ chances to be corrupted. A depth-$k$ root absorbs $N^{1/k}$. That is the
trade, it is sharp, and it has a closed-form optimum.

\subsection*{Contributions}

\begin{enumerate}
\item A conservation law. At $r(b)=1/b$ every decomposition tree yields exactly one finding,
independent of shape and of $N$ (Theorem~\ref{th:cons}, verified exactly).

\item A factorization. $Y=C^{k}N^{1-\delta}$ separates task size from architecture, and
implies that decomposition cannot raise yield (Theorems~\ref{th:fact} and~\ref{th:nfl}).

\item A measurement of both constants. $\delta=0.34$ $[0.30,\,0.38]$ on 600 production
deep-research traces by three identifications that do not share a failure mode, and $C=0.571$
$[0.527,\,0.615]$ read directly at $b=1$ (Section~\ref{sec:measure}), giving $Y=C^{k}N^{0.66}$.

\item A design rule. The optimal fan-out is where a ray from the origin is tangent to the
$(\ln b,\ln\rho(b))$ curve; the optimal depth solves $\varepsilon N^{1/k}\ln N/k^{2}=\ln(1/C)$
and is two or three at measured parameters (Section~\ref{sec:depth}).

\item A third constant. On 1,012 annotated multi-agent traces the probability that one
agent-to-agent brief is understood is $\mu=0.939$ $[0.935,\,0.960]$, identified by a
framework-fixed-effects fit that separates it from each framework's own baseline
(Section~\ref{sec:brief}). The per-tier penalty is then $C\mu=0.536$.

\item A calibrated cost axis. A production flat agent's token bill grows as $N^{1.39}$, not
the $N^{2}$ an append-only context predicts; at equal spend two tiers overtake flat at 403
findings (Section~\ref{sec:token}).

\item A negative empirical result. Production agent systems do not adapt their topology to
load. Delegation is decided in the first rounds of a session and is, if anything, less likely
as the context fills (Section~\ref{sec:prod}).
\end{enumerate}

Everything runs on a laptop CPU against three corpora that were already public.

\section{The Decomposition Tree}

A task requires $N$ atomic findings. They are discovered at the $N$ leaves of a rooted tree
$T$. Every internal node $v$ receives the outputs of its $b_v$ children, writes one summary,
and passes it up. The root writes the final answer.

A finding arriving at a node of fan-out $b$ survives into that node's summary independently
with probability $r(b)$. We call
\begin{equation}\label{eq:rho}
\rho(b)=b\,r(b)
\end{equation}
the \emph{carrying number}: the expected count of input items represented in one summary. The
yield of a tree is
\begin{equation}\label{eq:yield}
Y(T)=\sum_{\ell\in\mathrm{leaves}(T)}\ \prod_{v\in\mathrm{path}(\ell\to\mathrm{root})} r(b_v).
\end{equation}

Two structural facts make the rest of the paper short.

\begin{proposition}[walk representation]\label{pr:walk}
$Y(T)=\mathbb{E}\big[\prod_v \rho(b_v)\big]$, where the expectation is over a root-to-leaf
walk that picks uniformly among children at each step.
\end{proposition}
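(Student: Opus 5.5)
The idea is to rewrite each leaf's term in the yield \eqref{eq:yield} as the probability of reaching that leaf times a walk weight. Fix a leaf $\ell$, and write its root-to-leaf path as $v_0=\mathrm{root},v_1,\dots,v_m=\ell$, where $v_0,\dots,v_{m-1}$ are the internal nodes. Consider the walk that starts at the root and, from each internal node, steps to a child chosen uniformly at random. This walk reaches $\ell$ with probability exactly
\[
\Pr[\text{walk ends at }\ell]=\prod_{i=0}^{m-1}\frac{1}{b_{v_i}},
\]
because the choices at successive nodes are independent and uniform.

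Using the definition \eqref{eq:rho}, $r(b)=\rho(b)/b$, I then factor each product in \eqref{eq:yield}:
\[
\prod_{v\in\mathrm{path}(\ell\to\mathrm{root})} r(b_v)=\Big(\prod_{i} \frac{1}{b_{v_i}}\Big)\Big(\prod_i \rho(b_{v_i})\Big)=\Pr[\text{walk ends at }\ell]\cdot\prod_{v\in\mathrm{path}(\ell)}\rho(b_v).
\]
Summing over leaves is then summing over the outcomes of the walk, weighted by their probabilities. Every walk terminates at exactly one leaf, since the tree is finite and each step moves one level down. So the sum is precisely $\mathbb{E}\big[\prod_v\rho(b_v)\big]$, with the product taken over the internal nodes the walk visits.

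The only delicate point is bookkeeping about which nodes carry a factor. The product in \eqref{eq:yield} should range over the internal nodes on the path, root included, since those are the nodes that summarize. A leaf has no fan-out and contributes no factor, or equivalently a factor of $1$. I would state this convention explicitly so that both sides range over the same set of nodes. I would also note that unequal path lengths in irregular trees cause no difficulty, because the identity holds leaf by leaf. No step is a real obstacle; this matching of index sets is the only place an error could slip in.
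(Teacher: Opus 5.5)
Your proposal is correct and follows the paper's own argument: the uniform walk reaches leaf $\ell$ with probability $\prod_v 1/b_v$, and substituting $r=\rho/b$ into \eqref{eq:yield} turns the sum over leaves into that expectation. Your explicit convention that only internal nodes on the path carry a factor is a sensible clarification that the paper leaves implicit.
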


\begin{proof}
The walk reaches leaf $\ell$ with probability $\prod_v 1/b_v$. Substitute $r=\rho/b$ into
\eqref{eq:yield}.
\end{proof}

\begin{theorem}[conservation]\label{th:cons}
If $r(b)=1/b$ then $Y(T)=1$ for every tree $T$, every $N$, and every irregular shape.
\end{theorem}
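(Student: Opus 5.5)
The argument runs through Proposition~\ref{pr:walk}. With $r(b)=1/b$ the carrying number \eqref{eq:rho} is $\rho(b)=b\cdot(1/b)=1$ for every fan-out $b\ge 1$. Every factor in $\prod_v\rho(b_v)$ is then identically $1$, whatever path the uniform root-to-leaf walk takes. The random variable inside the expectation is therefore the constant $1$, and $Y(T)=\mathbb{E}[1]=1$. The argument never uses the number of leaves $N$, the depth, or any regularity of the fan-outs, so it covers every tree and every irregular shape at once.

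As an independent check that does not depend on the walk representation, I would also give a direct induction on the height of $T$, working from \eqref{eq:yield}.

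\begin{itemize}
\item \textbf{Base case.} A single leaf has an empty product along its path, so $Y=1$.
\item \textbf{Inductive step.} Suppose the root has $b$ children with subtrees $T_1,\dots,T_b$. Every leaf path passes through the root, so the factor $r(b)$ can be pulled out:
\[
Y(T)=r(b)\sum_{i=1}^{b}Y(T_i)=\tfrac1b\cdot b\cdot 1=1.
\]
Here the inductive hypothesis gives $Y(T_i)=1$ for each subtree.
\end{itemize}

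This version also makes the mechanism plain. Each internal node divides the incoming total by $b$ and multiplies it by $b$ when summing over its children, so mass is conserved level by level.

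There is essentially no obstacle. The one point to settle is a convention: the product in \eqref{eq:yield} should range over internal nodes only, so that a leaf contributes no factor $r$. Equivalently, leaves are treated as having no summarization step. The base case relies on this convention, and under it the walk probability in Proposition~\ref{pr:walk} is exactly $\prod_v 1/b_v$. I would state the convention explicitly before giving the one-line argument.
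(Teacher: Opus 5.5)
Your proof is correct and is the paper's own argument: substituting $\rho\equiv1$ into the walk representation of Proposition~\ref{pr:walk} leaves the expectation of the constant $1$. The induction on height and your explicit convention, that the path product runs over internal nodes only, are sound additions that the paper leaves implicit, but the theorem does not need them.
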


\begin{proof}
$\rho\equiv1$ in Proposition~\ref{pr:walk}.
\end{proof}

Theorem~\ref{th:cons} is the reference point for everything that follows. At proportional
crowding --- each item keeping a share of the summary inversely proportional to how many items
compete for it --- architecture is not merely unimportant, it is \emph{exactly} irrelevant. A
thousand agents in any arrangement deliver what one agent delivers. Any claim that a topology
helps is a claim that retention beats $1/b$; any claim that it hurts is a claim that retention
falls short of it. We verified Theorem~\ref{th:cons} on 20,000 random irregular trees with
fan-outs drawn per node; the worst deviation of $Y$ from 1 was $2.4\times10^{-15}$.

\section{The Yield Factorization}

Take $r(b)=Cb^{-\delta}$ over the range of fan-outs a system actually uses, so
$\rho(b)=Cb^{1-\delta}$.

\begin{theorem}[factorization]\label{th:fact}
For a uniform tree of fan-out $b$ and depth $k$ over $N=b^{k}$ findings,
\begin{equation}\label{eq:fact}
Y=\rho(b)^{k}=C^{k}\,N^{1-\delta}.
\end{equation}
The task-size term $N^{1-\delta}$ does not depend on the architecture, and the architecture
enters only as $C^{k}$.
\end{theorem}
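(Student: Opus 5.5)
The plan is to read the result off Proposition~\ref{pr:walk} and then substitute the power law.

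First, in a uniform tree of fan-out $b$ and depth $k$, every root-to-leaf walk passes through exactly $k$ internal nodes, and each has fan-out $b$. So the product $\prod_v \rho(b_v)$ in Proposition~\ref{pr:walk} is the same constant $\rho(b)^k$ along every walk. The expectation over the walk is therefore trivial, which gives $Y=\rho(b)^k$. As a cross-check, the same value comes straight from \eqref{eq:yield}: there are $N=b^k$ leaves, each contributing $r(b)^k$, so $Y=b^k r(b)^k=(b\,r(b))^k$, in agreement with \eqref{eq:rho}.

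Second, substitute $\rho(b)=Cb^{1-\delta}$ to get $\rho(b)^k=C^k\,(b^k)^{1-\delta}$. Then use $b^k=N$ to rewrite this as $C^kN^{1-\delta}$, which is \eqref{eq:fact}.

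The separation claim comes from reading off this expression. For fixed $N$, the admissible uniform architectures are the pairs $(b,k)$ with $b^k=N$. The factor $N^{1-\delta}$ is the same for all of them. The only thing that varies with the choice is $C^k$, since $b$ has been eliminated entirely through $b=N^{1/k}$.

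There is no real obstacle here; the only care needed is bookkeeping about what counts as a level. Depth $k$ has to mean the number of summarizing internal nodes on each root-to-leaf path, with the root counted and the leaves not counted. Under that convention $N=b^k$ and the exponent on $r(b)$ in \eqref{eq:yield} is exactly $k$. I would state this convention explicitly, and note that it matches the flat case: $k=1$ gives $Y=CN^{1-\delta}=N\,r(N)$, as computed in the introduction.
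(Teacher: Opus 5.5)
Your proposal is correct and takes essentially the paper's approach. The paper's proof is the one-line substitution $\rho(b)^{k}=C^{k}b^{k(1-\delta)}=C^{k}N^{1-\delta}$, with $Y=\rho(b)^{k}$ taken as immediate from Proposition~\ref{pr:walk}; you make that step explicit (every walk meets $k$ nodes of fan-out $b$, cross-checked against \eqref{eq:yield}) and state a depth convention consistent with the paper's.
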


\begin{proof}
$\rho(b)^{k}=C^{k}b^{k(1-\delta)}=C^{k}N^{1-\delta}$.
\end{proof}

Verified to $8.8\times10^{-16}$ over 400 combinations of $(C,\delta,b,k)$.

\begin{theorem}[no free lunch]\label{th:nfl}
$C=\rho(1)\le1$, since one item in cannot produce more than one item out. Hence $k=1$
maximizes \eqref{eq:fact}, strictly whenever $C<1$.
\end{theorem}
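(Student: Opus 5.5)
The plan has two parts: first identify $C$ with $\rho(1)$ and bound it by one, then read the optimal depth off the factorization of Theorem~\ref{th:fact}.

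\textbf{Step 1: $C=\rho(1)\le 1$.} By the definition \eqref{eq:rho}, $\rho(1)=1\cdot r(1)$. Under the power law $r(b)=Cb^{-\delta}$ this gives $r(1)=C$, hence $\rho(1)=C$. Since $r(1)$ is the probability that a single item survives a single summarization, $r(1)\le 1$. This is the formal content of ``one item in cannot produce more than one item out'': the expected number of input items carried into one summary cannot exceed the number of inputs. One caveat: the power law is only assumed over the range of fan-outs a system actually uses. The identification $C=r(1)$ therefore needs $b=1$ to lie inside that range, or else a continuous extension of the fitted law down to $b=1$. I will state that assumption explicitly rather than prove it. It is the only delicate point, and it is modelling rather than mathematics.

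\textbf{Step 2: $k=1$ maximizes \eqref{eq:fact}.} Fix $N$, and with it $\delta$. By Theorem~\ref{th:fact}, $Y(k)=C^{k}N^{1-\delta}$. The factor $N^{1-\delta}$ is positive and does not depend on $k$. It remains to compare the values $C^{k}$ over integers $k\ge1$.
\begin{itemize}
\item If $0<C<1$, the map $k\mapsto C^{k}$ is strictly decreasing. Hence $Y(1)>Y(k)$ for every $k\ge2$, and the flat tree is the strict maximizer.
\item If $C=1$, every depth ties. This recovers the $1/b$-type indifference of Theorem~\ref{th:cons} in the special case $\delta=1$, so $k=1$ is a maximizer, though not a strict one.
\item If $C=0$, the yield is identically zero, which is degenerate, and ``maximizes'' still holds weakly.
\end{itemize}

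\textbf{Integrality.} Theorem~\ref{th:fact} requires $N=b^{k}$, i.e.\ an integer fan-out at each depth. I will compare depths among the tree shapes that exist, or treat $b=N^{1/k}$ as real-valued, as the paper's design-rule section does. The monotonicity argument is the same in either case, because only $C^{k}$ varies with $k$.

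The main obstacle is not the algebra but the justification of $C\le1$, as noted in Step 1. I would handle it by interpreting $C$ as $r(1)$, a retention probability, and noting that the later measurement section observes $\rho(1)$ directly.
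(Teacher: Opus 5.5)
Your argument is correct and matches the paper's, which consists of the same two observations: $C=\rho(1)=r(1)$ is a single-item retention probability and so is at most one, and at fixed $N$ only the factor $C^{k}$ in \eqref{eq:fact} depends on $k$, so $k=1$ maximizes it. Two of your side remarks are worth keeping. Your note that $C=0$ makes every depth tie is a fair sharpening of ``strictly whenever $C<1$'', which really needs $0<C<1$. Your caveat that identifying $C$ with $\rho(1)$ requires extending the power law down to $b=1$ is the same issue the paper faces later, when it measures $\rho(1)$ directly rather than extrapolating.
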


Summarization is not the only thing a tier costs. A child also has to understand the brief its
parent wrote, and a flat agent writes none. A depth-$k$ tree has $k$ summarization hops but
only $k-1$ agent-to-agent briefs, so writing $\mu$ for the probability that a delegation
produces work aligned with what was asked,
\begin{equation}\label{eq:full}
Y = C^{k}\,\mu^{\,k-1}\,N^{1-\delta}.
\end{equation}
Since $\mu\le1$ this strengthens Theorem~\ref{th:nfl} rather than qualifying it: the per-tier
penalty is $C\mu$, not $C$. Section~\ref{sec:brief} measures $\mu$.

Theorem~\ref{th:nfl} is worth restating in plain terms, because it contradicts the standard
argument for decomposition. Splitting work so that each agent sees a small, clean context does
not recover the information a large context loses. The rot exponent $\delta$ multiplies $\ln N$
no matter how the tree is shaped, because a tree of depth $k$ applies a $b^{-\delta}$ penalty
$k$ times and $b^{k}=N$. What depth adds on top is $k$ independent summarization steps, each of
which can only lose. Yield-motivated decomposition is futile in the same way that
yield-motivated context pruning is \cite{r11}: the natural optimization is the wrong one, and
it is wrong by a constant factor per operation.

\begin{figure}[t]
\centering
\includegraphics[width=0.62\linewidth]{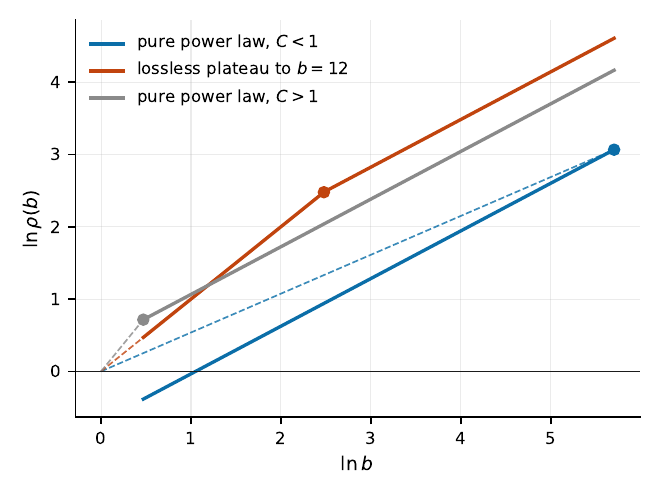}
\caption{The design rule. $\beta(b)=\ln\rho(b)/\ln b$ is the slope of the chord from the
origin, so $b^{\star}$ is the point of tangency. Under a pure power law with $C<1$ the
tangency runs off to the right and flat wins; with $C>1$ it collapses to the smallest $b$;
only a near-lossless plateau produces an interior optimum, at the end of the plateau.}
\label{fig:rule}
\end{figure}

\paragraph{The design rule.}
Theorem~\ref{th:nfl} has an escape hatch. Equation~\eqref{eq:fact} assumed a power law. In
general, comparing architectures at fixed $N$ means maximizing
\begin{equation}\label{eq:beta}
\beta(b)=\frac{\ln\rho(b)}{\ln b},\qquad Y=N^{\beta(b)},
\end{equation}
with ties broken toward the largest $b$ (same yield, fewer tiers, fewer places to inject an
error). Geometrically $\beta(b)$ is the slope of the chord from the origin to
$(\ln b,\ln\rho(b))$, so $b^{\star}$ is where a ray from the origin is tangent to that curve
from above. An interior optimum exists only if $\rho$ has a near-lossless plateau --- a range
of $b$ over which a summarizer keeps essentially everything --- and $b^{\star}$ is the end of
that plateau. Under a pure power law the tangency runs off to the boundary and flat wins,
which is Theorem~\ref{th:nfl} again. Figure~\ref{fig:rule} draws the three regimes.

We checked the rule against four analytic shapes, and it corrected two of our own
expectations. A hard plateau to $b=8$ produces a \emph{tie} in $\beta$ across $b\in[2,\,8]$;
without the largest-$b$ tie-break the rule returns $b^{\star}=2$, which wastes tiers for no
yield. Soft saturation, $\rho(b)=b/(1+b/12)$, gives $b^{\star}=3$ rather than the 12 we
assumed: gradual saturation forces narrow fan-out, and only a sharp plateau permits a wide one.

\section{Measuring the Constants}\label{sec:measure}

$\delta$ is measurable on any agent that is observed choosing among items in its context. We
use 600 production deep-research traces (23,026 searches, 11,889 extraction requests, drawn
from six public question-answering corpora), released with an earlier study of context cost
\cite{r11}. The agent searches, receives about ten results per search, and then calls an
extractor on specific URLs --- 99.6\% of the time on exactly one. Which URLs it pursues is
recorded in the tool arguments, so retention is read off behavior rather than inferred from
text overlap.

\subsection{The crowding exponent}

\paragraph{Aggregate.}
For each trace let $b$ be the number of distinct URLs surfaced and $u$ the number the agent
carried forward. By \eqref{eq:rho}, $u$ estimates $\rho(b)$. A log-log fit over 596 traces
gives
\begin{equation}\label{eq:agg}
u = 0.45\,b^{0.601},\qquad \mathrm{SE}(\beta)=0.034,\quad R^{2}=0.35.
\end{equation}
Figure~\ref{fig:carry} plots the 600 traces against the two degenerate lines. Both degenerate
points are rejected: $\beta=0$, the conservation point of Theorem~\ref{th:cons}, at $z=17.9$;
$\beta=1$, lossless carry, at $z=-11.9$. Median $b$ is 126.5 and median $u$ is 8. Per source
corpus the exponent is 0.633, 0.602, 0.591, 0.517 and 0.390, with standard errors between 0.07
and 0.10.

\begin{figure}[t]
\centering
\includegraphics[width=0.62\linewidth]{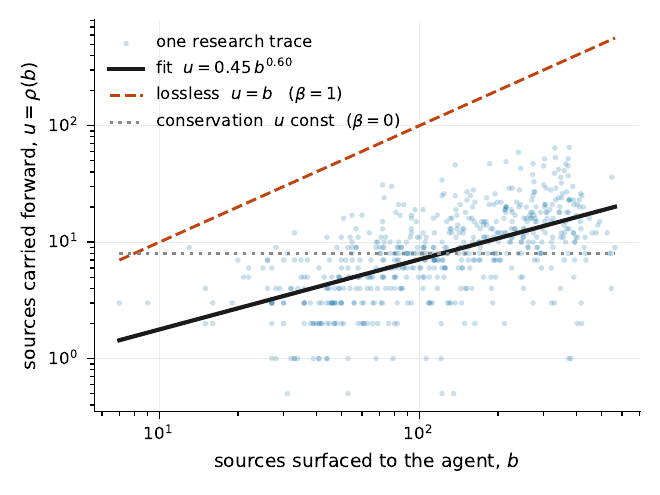}
\caption{Carry-forward on 600 production research traces. Each point is one trace: $b$ sources
surfaced, $u$ carried forward. The fit sits strictly between the two degenerate lines, lossless
carry ($\beta=1$) and the conservation point of Theorem~\ref{th:cons} ($\beta=0$).}
\label{fig:carry}
\end{figure}

\paragraph{The confound.}
Both $b$ and $u$ are the agent's own choices, so a hard question could inflate both. Two
within-trace identifications answer this.

\emph{Instantaneous choice.} At the moment of each extraction the agent picks from the $b$ URLs
already in context. Under uniform choice, the share of picks landing in the ten most recent
URLs would be $10/b$. It is not (Table~\ref{tab:choice}). At $b\approx300$ the agent is six
times more concentrated on recent items than uniform choice allows. Nothing about question
difficulty can produce this, because it is a single decision at a single instant.

\begin{table}[t]
\centering
\caption{Concentration of extraction choices, 8,390 picks.}
\label{tab:choice}
\small
\begin{tabular}{lrrr}
\toprule
$b$ at pick & picks & P(pick in newest 10) & uniform null \\
\midrule
2--20     &   485 & 86.2\% & 83.3\% \\
20--50    & 1,319 & 53.4\% & 29.4\% \\
50--100   & 1,625 & 40.4\% & 13.9\% \\
100--200  & 2,485 & 32.8\% &  6.9\% \\
200--400  & 2,388 & 22.2\% &  3.8\% \\
400+      &    88 & 21.6\% &  2.3\% \\
\bottomrule
\end{tabular}
\end{table}

\emph{Per-item hazard} (Figure~\ref{fig:crowd}). For each of 98,768 surfaced URLs we record the
crowding at its arrival and whether it is ever used, stratified by how many extraction
opportunities remained after it arrived. Within every stratum, $P(\text{used})$ falls
monotonically with crowding (Table~\ref{tab:hazard}). Fitting $\log p = a_s - \delta\log b$
with a stratum intercept and a cluster bootstrap over traces ($B=400$):
\begin{equation}\label{eq:delta}
\delta_{\mathrm{used}}=0.341\ [0.298,\,0.384],\qquad
\delta_{\mathrm{cited}}=0.398\ [0.315,\,0.462].
\end{equation}
The second uses an entirely different outcome --- whether the URL appears in the final answer,
available for 513 traces --- and lands on the same exponent. Three identifications that do not
share a failure mode give $\beta=1-\delta$ of 0.601, 0.659 and 0.602.

\begin{figure}[t]
\centering
\includegraphics[width=0.62\linewidth]{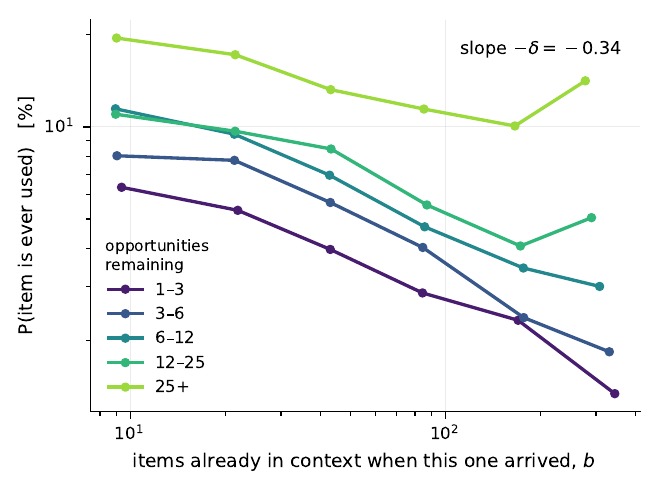}
\caption{Crowding destroys access, and question difficulty does not explain it. Each line holds
the number of remaining extraction opportunities fixed; within every one of them, an item that
arrives into a fuller context is less likely ever to be used.}
\label{fig:crowd}
\end{figure}

\begin{table}[t]
\centering
\caption{$P(\text{URL is ever used})$ by crowding at arrival and remaining opportunity, 98,768
URLs.}
\label{tab:hazard}
\small
\begin{tabular}{lrrrrrr}
\toprule
& \multicolumn{6}{c}{$b$ at arrival} \\
\cmidrule(l){2-7}
opportunities left & 5--15 & 15--30 & 30--60 & 60--120 & 120--250 & 250+ \\
\midrule
1--3   &  6.34\% & 5.33\% & 3.97\% & 2.87\% &  2.33\% &  1.34\% \\
3--6   &  8.04\% & 7.77\% & 5.66\% & 4.03\% &  2.38\% &  1.84\% \\
6--12  & 11.45\% & 9.46\% & 6.95\% & 4.71\% &  3.46\% &  3.01\% \\
12--25 & 10.99\% & 9.67\% & 8.46\% & 5.55\% &  4.08\% &  5.05\% \\
25+    & 19.49\% & 17.19\% & 13.22\% & 11.42\% & 10.06\% & 14.12\% \\
\bottomrule
\end{tabular}
\end{table}

The URL-level design cannot reach $C$. It is an extrapolation to $b=1$, outside the observed
range $b\in[5,\,564]$, and at small $b$ it is demand-limited: the agent stops extracting because
it has enough, not because it cannot see. $\delta$ survives that because demand-limiting is
absorbed into the stratum intercepts; $C$ does not.

\subsection{The per-level constant}

$C$ needs a hop where a single item can be observed entering and either surviving or not. The
corpus contains one. \texttt{web\_search} returns a \emph{numbered} list, so item boundaries
come from the tool rather than from a text heuristic, and the number of results varies from 1
to 50 across the corpus. Crucially $b=1$ occurs 550 times, so $\rho(1)$ is observed rather than
extrapolated. The hop is: $b$ results enter the context, the agent writes one turn. An item
counts as carried if that turn --- reasoning, prose, or the arguments of its next tool call ---
names the result's URL or the distinctive words of its title. A permutation null matches the
same items against a different hop's turn.

Over 16,082 hops, raw retention is 0.358 and the permutation null is 0.0012, so the matching is
specific. Figure~\ref{fig:const} shows the curve and where $C$ is read off; the right panel
shows that $r(b)$ sits above $1/b$ at every $b$ we observe, so $\beta>0$ throughout.

\begin{figure}[t]
\centering
\includegraphics[width=\linewidth]{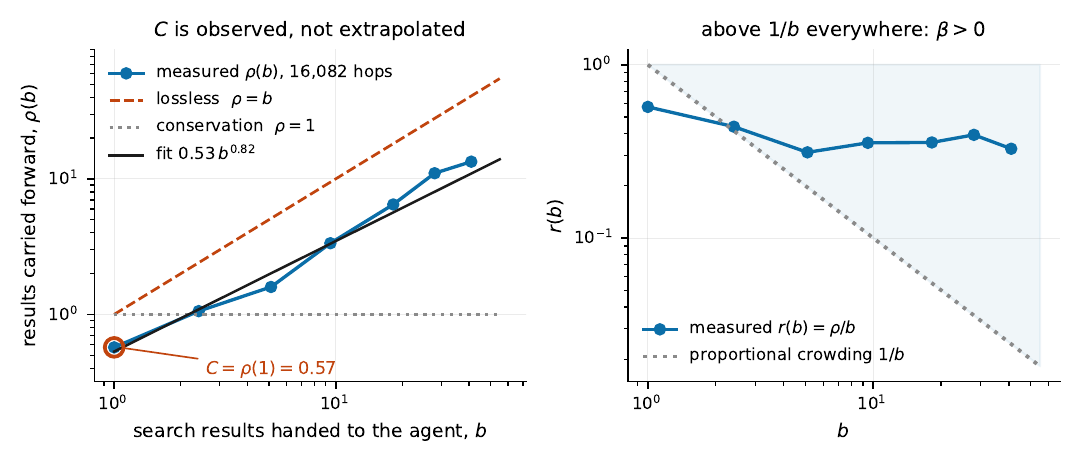}
\caption{The per-level constant, observed rather than extrapolated. \emph{Left:} carry-forward
against batch size; $b=1$ occurs 550 times, so $C=\rho(1)$ is a measurement. \emph{Right:} the
same data as a retention rate, which lies above proportional crowding at every observed $b$.}
\label{fig:const}
\end{figure}

\begin{table}[t]
\centering
\caption{Carry-forward at a single search hop, chance-corrected.}
\label{tab:const}
\small
\begin{tabular}{rrrrr}
\toprule
$b$ & hops & $\rho(b)$ & $r(b)=\rho/b$ & $1/b$ \\
\midrule
 1 &   550 &  0.571 & 0.571 & 1.000 \\
 2 &   364 &  1.008 & 0.504 & 0.500 \\
 3 &   251 &  1.127 & 0.376 & 0.333 \\
 5 &   236 &  1.538 & 0.308 & 0.200 \\
 7 &   480 &  2.752 & 0.393 & 0.143 \\
10 & 7,910 &  3.540 & 0.354 & 0.100 \\
$\approx$18 &   612 &  6.477 & 0.356 & 0.055 \\
$\approx$28 &   404 & 11.012 & 0.394 & 0.036 \\
$\approx$41 &   176 & 13.403 & 0.328 & 0.024 \\
\bottomrule
\end{tabular}
\end{table}

\begin{equation}\label{eq:C}
C=\rho(1)=0.571\ [0.527,\,0.615].
\end{equation}
Fits agree with the direct reading: $0.532$ $[0.500,\,0.562]$ over $b\in[1,\,10]$, $0.469$ over
$b\in[2,\,10]$ with $b=1$ held out, $0.475$ over the full range to $b=50$. Every variant rejects
$C\ge1$. Charging summarization alone, a two-tier system keeps 57.1\% of the findings a flat
agent keeps, three tiers 32.6\% and four tiers 18.6\%; Section~\ref{sec:brief} adds the second
per-tier cost and revises these downward.

Table~\ref{tab:const} also shows something we did not go looking for. From $b=5$ to $b=41$ the
retention rate $r(b)$ is essentially flat at about one third, an exponent of 0.13 rather than
0.34: the agent names about a third of whatever batch is put in front of it, however large the
batch. Yet the probability that an item is \emph{ever} acted on decays with $\delta=0.341$.
Information is not lost at the moment it arrives; it is lost while the context fills. The
exponent that belongs in $N^{1-\delta}$ is the second one, measured over the root's whole
accumulation.

\subsection{The briefing loss}\label{sec:brief}

$\mu$ is a property of the handoff, not of the summary, so it needs a corpus of multi-agent
traces with labelled coordination failures. MAST-Data \cite{r2} is one: 1,642 traces from seven
frameworks, eight benchmarks and five models, each carrying binary annotations for fourteen
failure modes, released under CC-BY-4.0. Category 2 of that taxonomy, \emph{inter-agent
misalignment}, is the authors' own grouping for a delegation that goes off-target, so we take
the whole category rather than picking modes from it. It fires in 53.3\% of traces; category 1
fires in 58.1\%, category 3 in 55.2\%, and 75.3\% of traces carry at least one mode.

MAST annotates per trace, and $\mu$ is per brief. Converting between them requires the number
of agent-to-agent handoffs in each trajectory, which means parsing seven log formats. Two of
our first counters were wrong in ways that would have dominated the answer: MetaGPT came out at
a median of one handoff because the pattern matched only the opening human message, and AG2 at
zero. Left uncorrected, MetaGPT's 63\% failure rate over $h=1$ would have forced $\mu=0.37$.
With corrected counters the usable sample is 1,012 traces --- 1,015 have a countable handoff
structure and three of those record none, leaving the model undefined for them
(Table~\ref{tab:frameworks}).

\begin{table}[t]
\centering
\caption{Handoff structure and misalignment rate by framework.}
\label{tab:frameworks}
\small
\begin{tabular}{lrcr}
\toprule
framework & traces & handoffs: median / p90 / max & P(category 2) \\
\midrule
MetaGPT    & 429 &  4 / \phantom{0}4 / \phantom{0}4 & 63.2\% \\
ChatDev    & 330 & 14 / 20 / 23 & 62.1\% \\
Magentic   & 195 & 17 / 39 / 41 & 60.0\% \\
AppWorld   &  30 &  8 / 27 / 91 & 76.7\% \\
HyperAgent &  28 &  4 / 12 / 40 & 82.1\% \\
\bottomrule
\end{tabular}
\end{table}

AG2 (597 traces) is excluded because its serialized trajectories expose no speaker transitions
in 95\% of cases, and OpenManus (30) because it is a single-agent planning flow with no
agent-to-agent brief.

\paragraph{Falsification test.}
The functional form is falsifiable, so we test it before using it. Modelling misalignment as
independent per brief gives $P(\text{fires}\mid h)=1-(1-q)^{h}$, which requires the failure
rate to rise with the handoff count. Pooled, the logistic slope on $\log h$ is
$+0.308\pm0.086$ ($z=3.6$). Within framework, which removes format and task-mix differences,
Magentic gives $+1.842\pm0.269$ ($z=6.8$) and ChatDev $+1.185\pm0.730$ ($z=1.6$); the other
three have almost no variation in $h$ and are uninformative about the slope. The form survives,
but on the evidence of essentially one framework.

That same fact makes the pooled maximum-likelihood estimate unidentified. MetaGPT contributes
429 traces at a near-constant $h=4$, so its $q$ absorbs a framework baseline rather than a
per-brief rate, and the pooled fit returns $\mu=0.900$. Separating the two,
\begin{equation}\label{eq:fe}
P(\text{fires}\mid f,h) = 1-(1-\alpha_f)(1-q)^{h},
\end{equation}
where $\alpha_f$ is framework $f$'s own baseline and $q$ is identified only by variation in $h$
within a framework, gives
\begin{equation}\label{eq:mu}
\mu = 1-q = 0.939\ [0.935,\,0.960].
\end{equation}
The fitted baselines are 0.000 for ChatDev and Magentic --- their failures are accounted for by
handoff count alone --- against 0.525 for MetaGPT, 0.593 for AppWorld and 0.743 for HyperAgent,
exactly the frameworks whose $h$ does not move. Fitting the two informative frameworks on their
own gives 0.933 and 0.946, which brackets the pooled fixed-effects estimate. The narrower mode
set (task derailment and ignored input) gives $\mu=0.977$ and the wider set (category 2 plus
disobeying task and role specification) $\mu=0.944$.

About one brief in sixteen goes off-target. The per-tier yield penalty is therefore
$C\mu=0.536$ rather than $C=0.571$: two tiers keep 53.6\% of what a flat agent keeps, three
tiers 28.7\%, four tiers 15.4\%.

Two biases both push $\mu$ down, so the depth penalty here is an upper bound. A run that goes
wrong loops more and therefore logs more handoffs, which inflates $q$. And $\mu$ multiplies a
whole subtree's yield, treating a misaligned delegation as a total loss, when a misaligned
subagent usually still returns something usable.

\section{Root Integrity and the Optimal Depth}\label{sec:depth}

Exactly one context in a running system persists for the whole task: the root's. It is also the
only one that cannot cheaply forget. Removing an item from a cached prefix costs a re-prefill
of everything behind it, so an agent's context is an append-only ratchet \cite{r11}. An error
written into an ephemeral agent's context is discarded when that agent finishes. An error
written into the root's is permanent.

Let $\varepsilon$ be the probability that an item absorbed into the persistent context is
wrong. A flat root absorbs all $N$ findings; a depth-$k$ root absorbs $N^{1/k}$ summaries:
\begin{equation}\label{eq:integ}
P(\text{root uncorrupted}) = (1-\varepsilon)^{N^{1/k}}.
\end{equation}
This is the whole case for depth, and it is exponential where the yield cost is only
geometric. Combining \eqref{eq:full} and \eqref{eq:integ},
\begin{equation}\label{eq:J}
J(k)=k\ln C+(k-1)\ln\mu+(1-\delta)\ln N+N^{1/k}\ln(1-\varepsilon),
\end{equation}
whose first-order condition is
\begin{equation}\label{eq:foc}
\frac{\varepsilon\,N^{1/k}\ln N}{k^{2}}=\ln\frac{1}{C\mu}.
\end{equation}
The left side falls in $k$, so the optimum is unique. Figure~\ref{fig:trade} shows the two
terms and their product. Equation~\eqref{eq:foc} reproduces the argmax of \eqref{eq:J} within
one tier in 80 of 80 parameter combinations.

\begin{figure}[t]
\centering
\includegraphics[width=\linewidth]{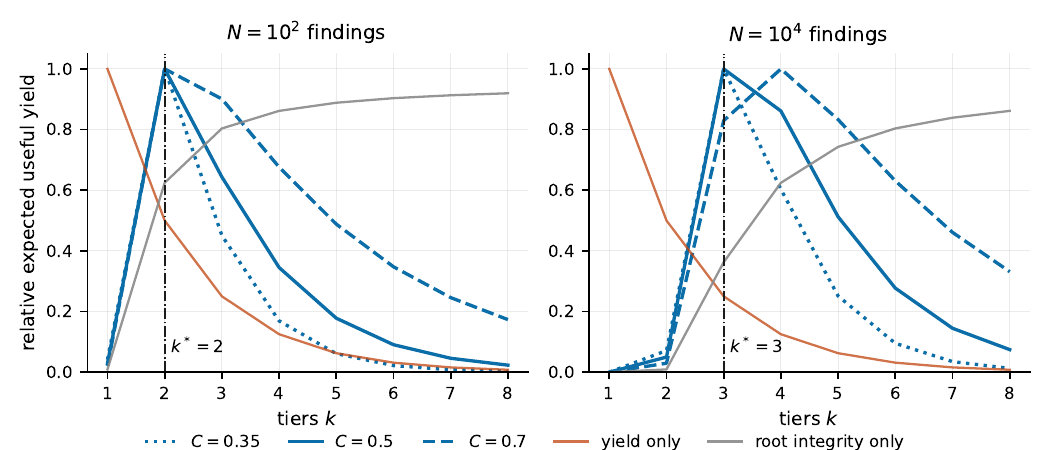}
\caption{The two axes and their product, at two task sizes. Yield falls geometrically in the
number of tiers while root integrity rises exponentially, so the optimum is interior and, at
realistic parameters, small.}
\label{fig:trade}
\end{figure}

Substituting the measured constants, $C\mu=0.536$ and $\delta=0.341$, into \eqref{eq:foc} gives
$k^{\star}=2$ at the median production session and $k^{\star}=3$ at its 99th percentile.
Figure~\ref{fig:kstar} maps $k^{\star}$ over $(N,\varepsilon)$ with the production sessions
marked. Two properties of $k^{\star}$ matter more than its exact value. It is small, and it
stays small (Table~\ref{tab:kstar}).

\begin{figure}[t]
\centering
\includegraphics[width=\linewidth]{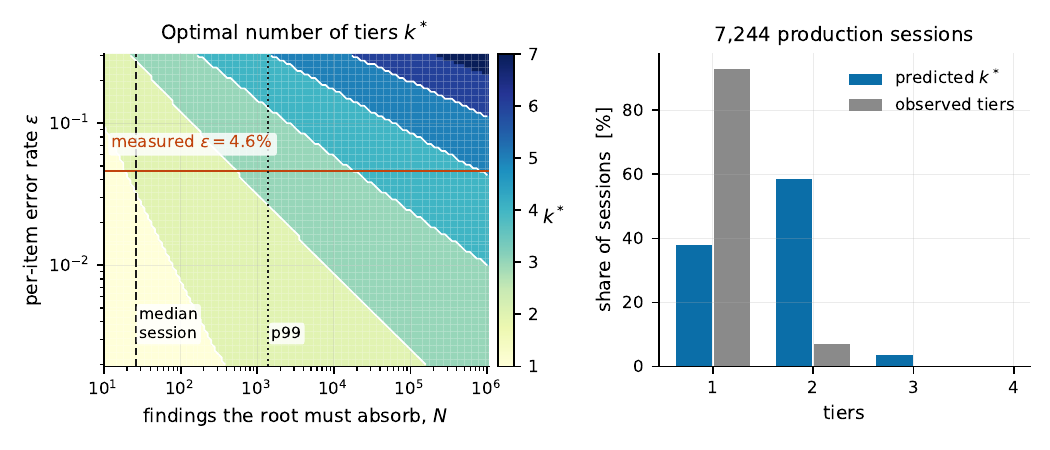}
\caption{\emph{Left:} the optimal number of tiers over task size and per-item error rate, with
the measured production error rate and the median and 99th-percentile session marked.
\emph{Right:} predicted against observed tiers across 7,244 production sessions.}
\label{fig:kstar}
\end{figure}

\begin{table}[t]
\centering
\caption{The optimal number of tiers is small and stays small. $\delta=0.34$,
$\varepsilon=0.02$.}
\label{tab:kstar}
\small
\begin{tabular}{lrrrrrrrr}
\toprule
$N$ & $10^{2}$ & $10^{4}$ & $10^{6}$ & $10^{8}$ & $10^{10}$ & $10^{12}$ & $10^{14}$ & $10^{16}$ \\
\midrule
$k^{\star}$ & 2 & 3 & 4 & 5 & 6 & 7 & 8 & 9 \\
\bottomrule
\end{tabular}
\end{table}

Fourteen orders of magnitude of task size move the optimum by seven tiers. Deep agent
hierarchies are not a thing that large tasks should grow into. The table is identical at
$C\mu=0.536$, at $C=0.571$ and at the placeholder $0.5$ we used before either was measured:
$k^{\star}$ is insensitive to the per-tier penalty over that range. It is robust to $C\mu$ over
$[0.2,\,0.7]$ and to $\varepsilon$ over $[0.002,\,0.2]$; only near-lossless summarization
($C\mu\ge0.9$) pushes past four tiers at realistic $N$.

The biological analogue is exact on this axis and instructive where it fails. Renewing tissues
are organized as stem cells, transit-amplifying cells, and differentiated cells --- two or
three tiers --- and the standard explanation is the same one as \eqref{eq:integ}: only the stem
compartment persists, so architecture exists to keep the number of divisions in the persistent
compartment small \cite{r10}. Where the analogy breaks is the niche itself. A stem cell niche
exists because cells cannot de-differentiate. An agent can: its specialization is a
\emph{suffix} of its context, and suffix truncation is the one free edit \cite{r11}. Resetting
a specialized agent to its base prompt costs nothing beyond spawning a fresh one, so there is
no reason to maintain a reserve of unspecialized agents. Asymmetric division and bounded depth
carry over; the niche does not.

\section{The Token Axis}\label{sec:token}

Yield and integrity are not the only things a decomposition changes. It also changes the bill,
and the bill is what most practitioners are actually managing. Including it is not optional:
adding only the \emph{cost of delegating} would stack the deck, because that cost penalizes
depth, while the prefix-caching \emph{saving} from keeping each context short rewards it. Both
belong in the same model.

\paragraph{The measured cost exponent.}
The textbook argument is wrong, and the corpus says so. A flat agent's context is append-only,
so with prefix caching each round re-reads the whole prefix and the bill should be quadratic in
the number of findings absorbed. Regressing the log of a session's total input tokens on the
log of its round count over 6,567 production sessions gives an exponent of $1.392\pm0.004$
($R^{2}=0.939$); separately, $1.450\pm0.005$ for Claude Code and $1.337\pm0.007$ for Codex.
Append-only predicts 2 and a constant-size context predicts 1. Production flat agents are
already sub-quadratic, because they compact.

That observation suggests context collapse is decomposition along the time axis rather than the
agent axis, which in turn suggests compaction and delegation should be substitutes. They are
not. Holding task size fixed, sessions that compact delegate \emph{more}: 20.7\% against 8.7\%
for sessions of 40 to 160 tool results, 29.5\% against 27.0\% above 160, and no difference at
all below 40. Systems under context pressure reach for both.

\paragraph{A calibrated cost model.}
Let $a$ be the measured exponent, $v$ the mean size of a tool result, and $c$ the overhead of
one delegation --- the brief out plus the summary back. For a uniform tree of fan-out
$b=N^{1/k}$ with $M=(N-1)/(b-1)$ internal nodes,
\begin{equation}\label{eq:cost}
T(N,k) = M\,v\,b^{a} + (M-1)\,c.
\end{equation}
Measured on the corpus, $v=4{,}194$ characters and $c=7{,}694$ characters (2,640 of brief,
5,054 returned), from 2,546 \texttt{Agent} calls.

\paragraph{The budget-matched question.}
At a fixed budget $B$, a flat agent can afford $N_1$ findings and a two-tier system can afford
a larger $N_2$, because its bill grows more slowly. The architecture that delivers more is the
one maximizing
\begin{equation}\label{eq:budget}
Y\big(N_k,k\big)\quad\text{subject to}\quad T(N_k,k)=B.
\end{equation}
Solving both sides at equal spend, with the briefing loss of Section~\ref{sec:brief} included,
gives a crossover at $N^{\star}=403$ findings; without it, 246 (Figure~\ref{fig:token}). Below
it a flat agent returns more findings per token, above it two tiers do. The crossover is where
the yield penalty $C\mu$ is finally outweighed by being able to afford more work.

\begin{figure}[t]
\centering
\includegraphics[width=\linewidth]{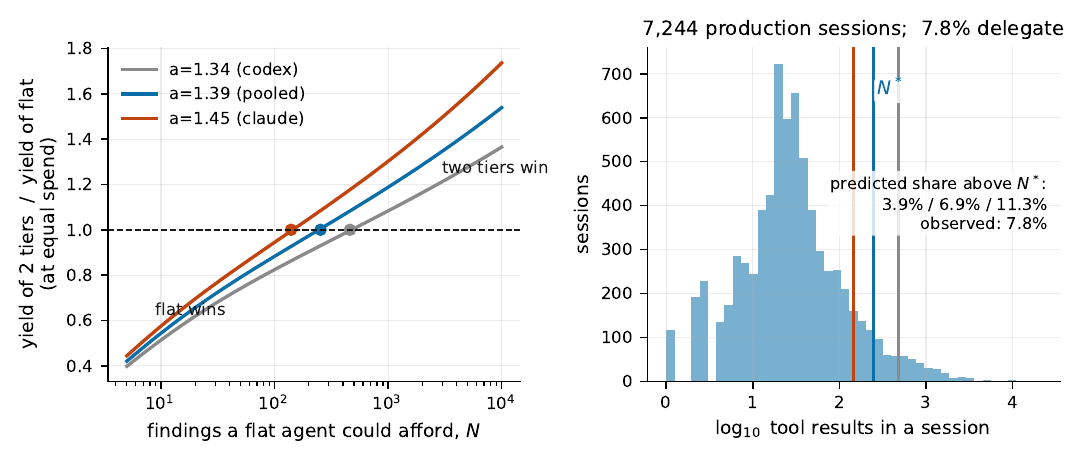}
\caption{\emph{Left:} yield of two tiers relative to flat at equal spend, at each provider's
own measured cost exponent; the crossover is where decomposition starts paying.
\emph{Right:} the size distribution of 7,244 production sessions against those thresholds.}
\label{fig:token}
\end{figure}

\section{What Production Systems Actually Do}\label{sec:prod}

We take production parameters from 743,819 tool calls across 8,058 Claude Code and Codex
sessions in the same release \cite{r11}.

\paragraph{Measured inputs.}
Tool calls return an error 4.60\% of the time (Claude) and 4.88\% (Codex). Sessions absorb a
median of 22 (Claude) and 40 (Codex) tool results, with 99th percentiles of 815 and 2,618. When
a context collapses, the surviving context is a median 15.2\% of the previous one by tokens
(IQR 9.6--26.4\%, 4,892 events) --- an upper bound on what any summarization step can carry.

\paragraph{Measured topology.}
Of 7,244 sessions with at least one tool call, 563 (7.8\%) ever spawn a subagent. Among rounds
that spawn, 74.5\% spawn exactly one; the mean parallel fan-out is 1.49; no deeper nesting
appears in the corpus. Production agent swarms are, empirically, almost flat.

\paragraph{The load-adaptation test.}
Equation~\eqref{eq:foc} predicts $k^{\star}$ rises with $N$, so an adaptive system should
delegate more as its context fills. Testing this is harder than it looks, and we report three
attempts because the difference between them is the finding.

Regressing whether a session delegates on the log of its total tool calls gives a slope of
$+0.649\pm0.031$. This is an artifact: a session that delegates records its subagent's tool
calls in its own ledger, so delegating inflates $N$. Recounting $N$ using only calls up to the
first spawn flips the slope to $-0.372\pm0.033$. This is also an artifact, in the opposite
direction: truncating at the first spawn assigns $N=3$ to a session that delegated at round 3.
Neither number identifies anything.

The identified design is a discrete-time hazard model that uses only information available
before the decision. The risk set is every session-round up to and including the first spawn
(502,718 rounds, 563 events). The outcome is whether the first spawn occurs at that round. The
covariate is accumulated tool-result volume \emph{before} that round, controlling for rounds
elapsed, which separates a full context from a long-running session
(Table~\ref{tab:hazardfit}).

\begin{table}[t]
\centering
\caption{Discrete-time hazard of the first delegation. 502,718 session-rounds, 563 events.}
\label{tab:hazardfit}
\small
\begin{tabular}{lrrr}
\toprule
term & coefficient & SE & $z$ \\
\midrule
log accumulated result chars & $-0.0448$ & 0.0117 & $-3.8$ \\
log rounds elapsed           & $-0.5864$ & 0.0326 & $-18.0$ \\
\bottomrule
\end{tabular}
\end{table}

Doubling the accumulated context multiplies the odds of delegating now by 0.969
$[0.954,\,0.985]$. With tool count in place of volume the coefficient is not significant
($-0.079\pm0.053$). Elapsed rounds dominates and is strongly negative.

Delegation in these systems is an opening move. Topology is chosen in the first rounds and
never revised, which is the standard complaint about fixed-role multi-agent frameworks turned
into a measurement. One caveat is required: unobserved heterogeneity biases duration dependence
downward in any single-spell hazard model, because sessions that ran long without delegating
are disproportionately sessions that never would. This is evidence against strong load
adaptation, not proof that none exists.

\paragraph{Reconciling the model with the topology.}
Section~\ref{sec:depth}'s objective --- yield times integrity, with no cost term --- says
70.1\% of these sessions should run two or more tiers. The observed figure is 7.8\%, and
measuring $C$ made that worse rather than better, because a larger $C$ makes hierarchy cheaper.

The token axis closes most of the gap. The share of the 7,244 sessions whose $N$ exceeds the
crossover $N^{\star}$ is given in Table~\ref{tab:share}, ignoring the briefing loss. The
observed rate sits inside the range the two providers' own exponents produce. Nothing here was
fitted to it: $C$ and $\delta$ come from the research traces, $a$ from session token bills, $v$
and $c$ from tool-call statistics.

\begin{table}[t]
\centering
\caption{Predicted against observed delegation, before the briefing loss is charged.}
\label{tab:share}
\small
\begin{tabular}{lrrrr}
\toprule
exponent $a$ & measured on & $N^{\star}$ & predicted share & observed \\
\midrule
1.337 & Codex sessions       & 477 & \phantom{0}3.9\% & \\
1.392 & both, pooled         & 246 & \textbf{6.9\%}   & \textbf{7.8\%} \\
1.450 & Claude Code sessions & 145 & 11.3\%           & \\
\bottomrule
\end{tabular}
\end{table}

\paragraph{Effect of the briefing loss.}
It moves the prediction the wrong way, and we report it anyway. $\mu$ penalizes depth, so
adding it pushes $N^{\star}$ up and the predicted share down: to 403 and 4.5\% at the
identified $\mu=0.939$, 294 and 5.8\% at the narrow mode set, 570 and 3.3\% at the unidentified
pooled estimate (Figure~\ref{fig:brief}). Sweeping everything we measured --- $\mu$ from 1.0 to
0.944, $a$ from 1.337 to 1.450 --- gives a predicted share between 0.7\% and 11.3\%. The
observed 7.8\% lies inside that interval, but the interval is wide. The 6.9\%-against-7.8\%
agreement in Table~\ref{tab:share} is sharper than the evidence supports, and the honest
summary is that the token axis brings a model that over-predicted delegation nine-fold into the
right order of magnitude, not that it predicts the rate.

\begin{figure}[t]
\centering
\includegraphics[width=\linewidth]{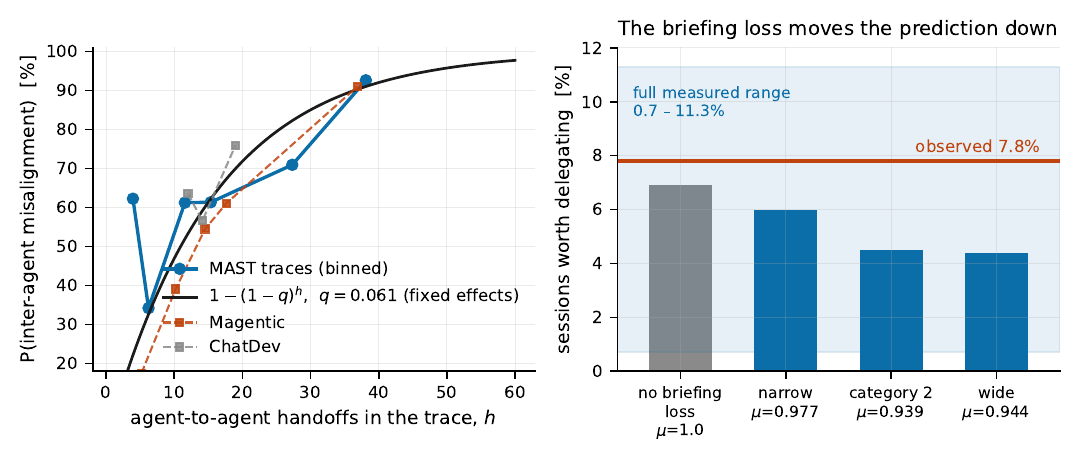}
\caption{\emph{Left:} misalignment against handoff count on 1,012 annotated multi-agent traces,
with the fitted independent-per-brief curve; the two frameworks whose handoff count varies
enough to identify the slope are shown separately. \emph{Right:} charging the briefing loss
moves the predicted share of sessions worth delegating away from the observed rate, not toward
it.}
\label{fig:brief}
\end{figure}

\paragraph{The integrity term.}
With the token axis carrying the explanation there is little room left for integrity.
Reintroducing $(1-\varepsilon)^{N^{1/k}}$ into \eqref{eq:budget} moves $N^{\star}$ from 246 at
$\varepsilon=0$ to 212 at $\varepsilon=10^{-4}$, 117 at $10^{-3}$, and 15 at the measured
tool-failure rate of 4.6\%, where the model predicts near-universal delegation. Matching 7.8\%
requires $\varepsilon$ below about $10^{-4}$.

We do not believe silent contamination is four orders of magnitude rarer than visible tool
failure. The more likely fault is our own integrity term: $(1-\varepsilon)^{N}$ treats any
contamination as total failure, and one wrong fact in a long context usually is not a failed
task. A graded term would sit between. Either reading supports the same conclusion, which is
the one that matters: the rate that belongs in an integrity term is not the rate at which tools
return errors, and an earlier version of this paper inverted for $\varepsilon$ and obtained
0.275\% only because the cost axis was missing and integrity was being asked to do its job.

\paragraph{Rates and mechanisms.}
That production delegates at about the rate the token axis recommends is not evidence that it
is optimizing anything. The hazard model above found no within-session response to a filling
context. Production lands on roughly the right \emph{amount} of delegation while allocating it
without regard to load --- which is exactly the gap an adaptive topology would fill.

\section{Related Work}

Error propagation in multi-agent systems has been studied empirically. MAST \cite{r2} builds a
failure taxonomy from 150 expert-annotated traces and releases 1,600 more, finding that
inter-agent misalignment and missing verification dominate. Hallucination cascade analyses
\cite{r3} measure per-stage amplification and attenuation in short agent chains. Resilience
work \cite{r6} injects faulty agents and finds hierarchical oversight more robust than flat peer
discussion. These establish that errors propagate and that topology matters; none gives a law
relating yield to depth, and none identifies a threshold. Topology studies \cite{r4} find that
\emph{moderately} sparse communication does best, suppressing error propagation while
preserving useful diffusion --- an interior optimum in the same spirit as ours, reached
empirically. The scaling study closest to ours \cite{r5} fits a predictive model across 180
configurations and five architectures and reaches cross-validated $R^{2}=0.373$ over six
benchmarks (0.413 with a task-grounded capability metric), but the model is empirical rather
than structural, and it does not separate task size from architecture.

The mathematics we use is old and, as far as we can tell, has not been applied here. Von
Neumann's threshold theorem for computation with unreliable components \cite{r7} and its sharp
successors for noisy circuits \cite{r8} and for broadcasting on trees \cite{r9} all ask when
information survives a noisy tree. The parallel is closer than it first appears: Evans and
Schulman place criticality where the product of the per-gate information transmission factor
and the fan-in equals one \cite{r8}, and our conservation point is $\rho(b)=b\,r(b)=1$, the
same product set to the same value. We reach it from the opposite structure. In a noisy circuit
the children of a gate are redundant copies, so fan-in supplies error correction and the
question is whether redundancy outruns noise. In an agent tree the children work on
\emph{different} subtasks, so fan-out supplies no redundancy at all and aggregation is
conjunctive. The shared threshold with the opposite mechanism is, we think, the reason this
literature has not been applied here.

Retention under crowding is documented as position bias: accuracy on multi-document question
answering is highest at the ends of a context and degrades in the middle, and the degradation
deepens as the number of documents rises from 10 to 20 to 30 \cite{r1}. That literature
measures $r(b)$ at a few values of $b$ without naming it or asking what it implies for
architecture.

The cost model for context edits, the ratchet property that makes the root context the only
persistent state, and two of the three corpora used here come from an earlier study of context
eviction economics \cite{r11}.

\section{Limitations}

$C$ is measured at one hop --- a search result carried into the agent's next turn --- and
assumed constant across levels. In a real hierarchy the item granularity changes with level: a
subagent's whole report is not the same kind of object as a single tool result. A single
$(C,\delta)$ pair across levels is an idealization, and the two exponents we measure (0.13
immediately, 0.34 eventually) are direct evidence that one power law does not describe the
whole curve.

An item the agent had no use for counts against $C$, so the estimate is conservative in a way
we cannot quantify: it measures what is carried, not what could have been.

The retention model treats items as surviving independently. Real summarizers drop correlated
groups, which will make the effective $b$ smaller than the nominal one and should bias $\delta$
toward zero.

The integrity term treats any contamination as total failure. $(1-\varepsilon)^{N}$ is too
harsh for a long context in which one wrong fact is usually survivable, and
Section~\ref{sec:prod} shows the model is sensitive to exactly this. A graded degradation term
is the obvious next version and we have not built one.

The cost model prices prefill and delegation overhead. It ignores latency and output tokens.

$\mu$ is identified by variation in handoff count within essentially one framework, on traces
from benchmark runs rather than production. Both of its biases point the same way --- a failing
run logs more handoffs, and a misaligned delegation is treated as a total loss --- so 0.939 is
a lower bound on alignment and the depth penalty it implies is an upper bound.

The deep-research corpus is one scaffold with one tool vocabulary. The exponent replicates
across the six question corpora it draws from, but not across agent designs.

The hazard result is subject to frailty bias, as noted in Section~\ref{sec:prod}.

\section{Conclusion}

Decomposition does not create information. Under power-law retention the yield of a
decomposition tree is $C^{k}N^{1-\delta}$: the task-size term is the same for every
architecture, and depth contributes only a factor $C\le1$ per level. Both constants are
measurable on traces that already exist. On 600 production research traces $\delta=0.34$
$[0.30,\,0.38]$ by three identifications that do not share a failure mode, and $C=0.571$
$[0.527,\,0.615]$ read directly at $b=1$ over 16,082 hops. The law for this agent is
$Y=C^{k}\mu^{k-1}N^{0.66}$. With the briefing loss of Section~\ref{sec:brief} charged, one tier
of hierarchy costs 46.4\% of the findings. None of the three constants is a property of the
tree; all three are properties of the agent.

What depth buys lies on two other axes. The root context is the only state that persists and
the only state that cannot cheaply forget, and depth cuts its exposure from $N$ items to
$N^{1/k}$. Depth is also cheaper: a production flat agent's bill grows as $N^{1.39}$, and at
equal spend two tiers overtake flat at 403 findings once the briefing loss is charged. Across
every parameter we measured the model puts 0.7\% to 11.3\% of production sessions above that
threshold, against 7.8\% that delegate: the right order of magnitude, and no sharper. The rule
is simple enough to apply: fan out as wide as summarization stays near-lossless and no wider,
and add a tier when the task is large enough that the bill, not the information, is the binding
constraint.

Deployed systems arrive at roughly the right amount of delegation without appearing to compute
it. They choose a topology in the first rounds of a session and hold it, and the hazard of
delegating is flat or slightly falling in how full the context has become. The amount is about
right; the allocation ignores load. Making topology respond to load is a small change to an
agent harness and, on these numbers, the one with the clearest justification.

\section*{Reproducibility}

Every number in this paper is produced by a script in \texttt{anc/code/}, on a laptop CPU,
from three public corpora: the two released with \cite{r11} and MAST-Data \cite{r2}, which
\texttt{get\_data.sh} fetches. No model was queried and no GPU was used. \texttt{audit.py}
re-derives every numeric assertion in this paper from the data and reports a pass or fail for
each.

\begin{footnotesize}
\begin{verbatim}
python3 anc/code/verify.py        # Theorem 1 and the tree-shape results
python3 anc/code/theory.py        # Theorems 2 and 3, the design rule, the optimal depth
python3 anc/code/measure_rb.py    # Equation (6), the aggregate exponent
python3 anc/code/diagnose.py      # Table 1 and the within-trace diagnostics
python3 anc/code/estimate.py      # Equation (7), Table 2, cluster bootstrap
python3 anc/code/measure_C5.py    # Equation (8), Table 3, the per-level constant
python3 anc/code/briefing.py      # Equations (9)-(10), Table 4, the briefing loss
python3 anc/code/production.py    # Section 7 production parameters
python3 anc/code/revealed.py      # the two biased regressions, and the inversion
python3 anc/code/hazard.py        # the identified hazard model
python3 anc/code/cost2.py         # Section 6, the calibrated token axis
python3 anc/code/figures.py       # Figures 1, 2, 3, 5, 6
python3 anc/code/fig6.py          # Figure 4
python3 anc/code/fig7.py          # Figure 7
python3 anc/code/fig8.py          # Figure 8
python3 anc/code/audit.py         # re-derives every number asserted above
python3 anc/code/verify_pdf.py    # checks the typeset PDF asserts exactly those numbers
python3 anc/code/verify_refs.py   # checks every arXiv citation against the arXiv API
\end{verbatim}
\end{footnotesize}

\section*{Authorship}

Rong He is the author of this work and of all results reported in it. Claude (Anthropic) was
used as a research assistant during development; that contribution is secondary and is
disclosed here rather than in the commit history.

\end{document}